\documentclass[letterpaper, 10 pt, conference]{ieeeconf}  

\IEEEoverridecommandlockouts                              

\usepackage{graphicx}
\usepackage{mathptmx} 
\usepackage{times} 
\usepackage{amsmath} 
\usepackage{amssymb}  
\usepackage{cite}
\usepackage{algorithm}
\usepackage{algpseudocode}
\usepackage{url}

\usepackage{booktabs}

\newtheorem{remark}{Remark}
\newtheorem{defn}{Definition}
\newtheorem{prop}{Proposition}
\newtheorem{assumption}{Assumption}
\newtheorem{theorem}{Theorem}
\newtheorem{lemma}{Lemma}

\newcommand{\R}{\mathbb{R}}
\newcommand{\norm}[1]{\left\lVert #1 \right\rVert}

\newcommand{\rank}{\operatorname{rank}}
\newcommand{\sym}{\operatorname{sym}}

\newcommand{\GL}{\operatorname{GL}}
\newcommand{\Ogroup}{\mathcal{O}}
\newcommand{\score}{S}
\newcommand{\design}{\mathcal{K}}

\title{\LARGE \bf
Anomaly Score Preserving Data Poisoning Attacks on\\
Data-Driven Controller Design
}

\author{Motoharu Tsuboi, Hampei Sasahara, Iori Takaki, and Hideaki Ishii
\thanks{*This work was supported in part by JST-ASPIRE Program Grant Number JPMJAP2402 and JSPS KAKENHI Grant Number JP24K17296 and JST PRESTO Grant Number JPMJPR26K3.}
\thanks{The authors are with the Graduate School of Information Science and Technology,
The University of Tokyo, Tokyo 113-8656, Japan. 
{\tt\small Email:\{tsuboharu1202, hsasahara, ioritakaki\}@g.ecc.u-tokyo.ac.jp, hideaki\_ishii@ipc.i.u-tokyo.ac.jp}.}%
}

\begin{document}

\maketitle
\thispagestyle{empty}
\pagestyle{empty}

\begin{abstract}

This paper investigates stealthy data poisoning attacks against data-driven controller synthesis and develops an attack generation method.
During data collection, an attacker can sequentially manipulate state measurements, while the defender attempts to detect such manipulation using a residual-based anomaly score based on the Mahalanobis distance.
Under suitable assumptions on the defender's prior knowledge, we characterize the class of time-varying affine transformations that preserve the anomaly score for arbitrary data and parameterize this class by the orthogonal group.
Based on this characterization, we formulate attack design under the stealthiness condition as an optimization problem over the orthogonal group and propose a Riemannian gradient method.
Numerical experiments demonstrate that the proposed attack can degrade closed-loop performance while remaining stealthy.
\end{abstract}

\section{Introduction}
In control system design, constructing controllers from collected data has become a standard approach for systems with unknown dynamics.
The indirect approach identifies a system model from data and then designs a controller based on it~\cite{ljung1999system,vanoverschee1996subspace}.
More recently, 
direct data-driven control, which designs controllers directly from data without explicitly identifying a model, has received increasing attention~\cite{willems2005persistency,depersis2020formulas,vanwaarde2020data,Dorfler2023}.
In either approach, the resulting controller depends critically on the collected data.
This dependence can be exploited by cyberattacks on the collected data, providing an emerging attack vector against control systems.
In particular, malicious manipulation of data used for controller design, commonly referred to as a data poisoning attack, can have severe consequences.
Threat assessment and the development of countermeasures against data poisoning attacks have therefore become important research topics in data-driven control~\cite{russo2021poisoning,Kaminaga2024CCTA,sasahara2026adversarial,Anand2025arxiv,digge2026systematic,takaki2026datapoisoningattacksinformativity}, and the practical significance of such threats has also been demonstrated in applications such as autonomous driving systems~\cite{burbano2026badcontrol}.

To properly assess the threat posed by sophisticated data poisoning attacks, it is essential to consider not only degradation of control performance but also evasion of detection, i.e., stealthiness.
Previous studies have evaluated attack stealthiness by measuring the magnitude of data manipulation relative to the original data using matrix norms~\cite{sasahara2026adversarial,russo2021poisoning}.
In practice, however, the designer does not have access to the unmanipulated data, and a bound on the manipulation magnitude alone does not generally guarantee evasion of the detection mechanism used by the designer.
Moreover, analyses based on norm constraints may overlook sophisticated attacks that substantially manipulate the data without affecting the detection statistic, thereby underestimating the range of potential threats.
Attack stealthiness should therefore be defined rigorously in relation to the anomaly-detection statistic actually employed by the defender.
This issue has been studied for least-squares identification of linear dynamical systems, where detectability is analyzed using residual norms and $F$-test statistics~\cite{russo2023detectability}.
Another issue concerns the availability of data to the attacker.
Attack methods that require the entire dataset in advance are not directly applicable to attacks that sequentially manipulate previously unseen sensor data.

The objective of this study is to develop a framework for evaluating data poisoning attacks against data-driven controller design that is based on a specific detection statistic and allows sequential manipulation of previously unseen data.
In this study, we employ the Mahalanobis distance, which is widely used for anomaly detection in multivariate data~\cite{JohnsonWichern2007,Qin2012Survey,Venkatasubramanian2003PartIII,pasqualetti2013attack,giraldo2018physics}, for defining the anomaly score.
Specifically, the anomaly score is defined as the infimum, over candidate models consistent with the designer's prior knowledge, of the sum of squared Mahalanobis distances between the observed states and their one-step predictions.
Under the detection scheme, we consider an attack scenario in which the attacker knows the true system model and compromises the sensor communication channel during data collection. The attacker sequentially manipulates the state data through a time-varying affine transformation to degrade the resulting control performance while remaining stealthy.

To obtain a stealthiness guarantee for previously unseen data, we define \emph{$S$-invariant stealthiness} as the property of a data transformation that preserves the anomaly score for arbitrary data.
First, under suitable assumptions on the prior knowledge, we show that a time-varying affine transformation is $S$-invariant stealthy if and only if it reduces to a time-invariant weighted orthogonal transformation that also preserves the candidate model set.
Based on the finding, we formulate the generation of manipulated data as an optimization problem over the orthogonal group and propose a gradient-based attack-generation method on a Riemannian manifold.
Furthermore, we introduce a regularization term into the objective function to suppress excessive data manipulation that may be noticed by other detection mechanisms.
Numerical experiments on a standard benchmark model demonstrate that the proposed method can significantly destabilize the closed-loop system while preserving the anomaly score. Specifically, it increases the spectral radius of the closed-loop matrix to 2.674 while keeping the anomaly score nearly zero ($2.34 \times 10^{-23}$). In contrast, an existing method achieves a spectral radius of 1.234 with a comparable amount of data manipulation, but increases the anomaly score to $2.21 \times10^4$.

The main contributions of this paper are summarized as follows.
First, we formulate data-independent stealthiness, referred to as $S$-invariant stealthiness, based on a concrete anomaly-detection statistic, and characterize its necessary and sufficient conditions.
Second, we develop a practical attack generation method based on Riemannian gradient optimization.
Third, we introduce and assess an attack scenario in which previously unseen data flowing through a sensor communication channel are sequentially manipulated during the data-collection stage. Our characterization enables the sequential application of a precomputed attack transformation to previously unseen data, guaranteeing anomaly-score preservation independently of the data used for attack optimization.

This paper is organized as follows.
Section~\ref{sec:system_and_data} introduces the system, the controller-design procedure, and the attacker model.
Section~\ref{sec:stealth} characterizes $S$-invariant stealthy transformations.
Section~\ref{sec:optimization} presents the attack design algorithm.
Section~\ref{sec:numerical} provides numerical experiments, and Section~\ref{sec:conclusion} concludes the paper.

\indent \textit{Notation:}~For a square matrix $M$, let $\rho(M)$ denote its spectral radius.
The Frobenius norm of a matrix $M$ is denoted by $\|M\|_{\mathrm{F}}$, and its elementwise maximum norm is defined as $\|M\|_{\max}:=\max_{i,j}|M_{ij}|$.
For $M\in\mathbb{R}^{n\times T},$ and a symmetric positive definite matrix $\Sigma\in\mathbb{R}^{n\times n}$,
define $\|M\|_{\Sigma^{-1},\mathrm{F}}:= \sqrt{\operatorname{tr}\!\left(M^\top\Sigma^{-1}M\right)}$.
We use $\GL(n)$ and $\mathrm{O}(n)$ to denote the general linear group of degree $n$ and the orthogonal group, respectively, and define 
$\operatorname{sym}(M):=(M+M^\top)/2$.
For $A\in\mathbb R^{n\times n}$ and $\varepsilon>0$,
let $B_\varepsilon(A)$ denote the open ball of radius $\varepsilon$ centered at $A$.
For a set $\mathcal S\subseteq\mathbb R^{n\times n}$,
its closure is denoted by $\overline{\mathcal S}$.

\section{Controller Design and Anomaly Detection}
\label{sec:system_and_data}
\subsection{Controller Design}
Consider the discrete-time linear time-invariant system:
\begin{equation}
 x_{t+1}=Ax_t+Bu_t+w_t,
 \qquad t=0,1,\ldots,T-1
 \label{eq:system}
\end{equation}
where $x_t\in\R^n$ is the state, $u_t\in\R^m$ is the input, and $w_t\in\R^n$ is the process noise.
The matrices $A\in\R^{n\times n}$ and $B\in\R^{n\times m}$ are the true system matrices.
The process noise is assumed to follow a zero-mean Gaussian distribution with covariance $\Sigma\succ 0$, i.e., $w_t\sim\mathcal{N}(0,\Sigma)$.
Define the state and input data matrices collected from the system as
\begin{align}
 X&:=\begin{bmatrix}x_0&x_1&\cdots&x_{T-1}\end{bmatrix}\in\R^{n\times T},\nonumber\\
 Z&:=\begin{bmatrix}x_1&x_2&\cdots&x_T\end{bmatrix}\in\R^{n\times T},\nonumber\\
 U&:=\begin{bmatrix}u_0&u_1&\cdots&u_{T-1}\end{bmatrix}\in\R^{m\times T}
 \label{eq:data_matrices}
\end{align}
and denote them collectively by $D:=(X,Z,U)$.
Also, define the process noise as
\begin{equation}
 W:=\begin{bmatrix}w_0&w_1&\cdots&w_{T-1}\end{bmatrix}\in\R^{n\times T}.
 \label{eq:noise_matrix}
\end{equation}
Then, from \eqref{eq:system}, these data matrices satisfy
\begin{equation}
 Z=AX+BU+W.
 \label{eq:data_dynamics_with_noise}
\end{equation}

We consider a setting in which the controller designer does not have complete knowledge of the true system matrices $(A,B)$ and designs a state-feedback controller using the collected data $D$.

We represent the controller design rule employed by the designer as the mapping
\[
 \design:\mathbb{R}^{n\times T}\times \mathbb{R}^{n\times T} \times \mathbb{R}^{m\times T}\to\mathbb{R}^{m\times n}
\]
and define the control input as $u_t=Kx_t$, where the feedback gain is given by $K:=\design(D)$.

\begin{remark}
Data-based controller design includes indirect and direct approaches~\cite{Dorfler2023}.
The former identifies a system model before designing a controller, while the latter constructs a controller directly from data.
We consider both approaches and denote the general mapping from data $D$ to a feedback gain $K$ by $\design$.
\end{remark}

\subsection{Anomaly Detection}
Next, we formulate a prior-knowledge-based anomaly detection method for evaluating whether the collected data are consistent with the designer's prior knowledge.

\begin{assumption}[Prior Knowledge]
\label{ass:prior_knowledge}
The designer has prior knowledge of the true system matrices
in the form
\begin{equation}
 A\in\mathcal{A}\subseteq\mathbb{R}^{n\times n},
 \qquad
 B\in\mathcal{B}\subseteq\mathbb{R}^{n\times m},
 \label{eq:prior_sets}
\end{equation}
where $\mathcal{A}$ and $\mathcal{B}$ are known closed sets.
\end{assumption}

A pair $(A_c,B_c)\in\mathcal{A}\times\mathcal{B}$
is referred to as a candidate model.
Combining prior system knowledge with data-driven approaches has been widely studied as a natural framework in recent data-driven control and modeling.
Examples include system-theoretic properties, such as stabilizability and controllability, as well as structural constraints~\cite{berberich2023combining,shakouri2025prior,engeln2026prior,iwata2025positive}.

For data $D=(X,Z,U)$ and a candidate model $(A_c,B_c)$, we evaluate the residual at each time instant using the Mahalanobis distance induced by the process-noise covariance $\Sigma$. The overall residual magnitude is then given by $\left\|Z-A_cX-B_cU\right\|_{\Sigma^{-1},\mathrm{F}}^2$.
We define the infimum of this quantity over all candidate models as the anomaly score of the data $D$.

\begin{defn}[Anomaly Score]
For data $D=(X,Z,U)$, define the anomaly score as
\begin{equation}
 S(D)
 :=
 \inf_{A_c\in\mathcal{A},\,B_c\in\mathcal{B}}
 \left\|Z-A_cX-B_cU\right\|_{\Sigma^{-1},\mathrm{F}}^2.
 \label{eq:model_consistency_score}
\end{equation}
\end{defn}

Given a prescribed threshold $\eta>0$, the designer declares the data $D$ anomalous if $S(D)>\eta$.
The quantity in \eqref{eq:model_consistency_score} can be interpreted as the minimum negative log-likelihood over the candidate models under Gaussian noise~\cite{JohnsonWichern2007,yin2023maximum}.
Residual-based evaluations using the Mahalanobis distance are standard in a wide range of fields, including multivariate analysis, machine learning and control engineering
~\cite{JohnsonWichern2007,Qin2012Survey,mo2010false,Venkatasubramanian2003PartIII,lee2018simple,pasqualetti2013attack,giraldo2018physics}.
Recent work on data-driven control evaluates attack detectability using weighted squared residuals under maximum-likelihood models~\cite{russo2023detectability}.

\subsection{Formulation of Data Poisoning Attacks}
\label{sec:attack_scenario}

\begin{figure}[t]
\centering
\includegraphics[width=0.95\linewidth]{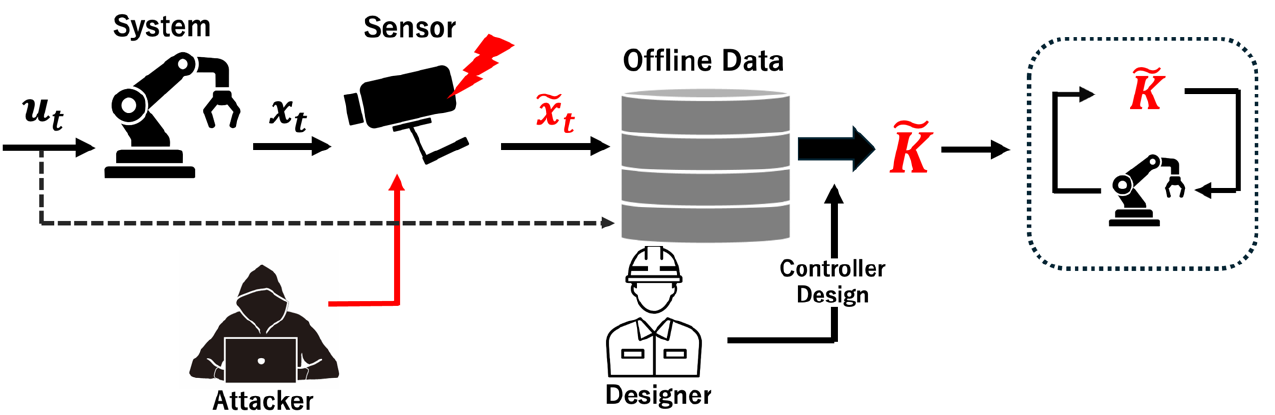}
\caption{Sensor attack during the collection of data for controller design.}
\label{fig:threat_scenario}
\end{figure}

The attack scenario considered in this paper is illustrated in Fig.~\ref{fig:threat_scenario}.
The attacker is assumed to know the true system matrices $(A,B)$ and to be capable of manipulating state data obtained from the sensors.
In contrast, the input signals used for data collection are predetermined by the designer, and therefore the attacker cannot manipulate the input data.
After data collection, the designer constructs a controller from the manipulated data and implements it on the true system.
The attack terminates before the implementation stage.

In this study, the attacker manipulates the state data using the following time-varying affine transformation $\mathcal{T}$:
\begin{equation}
 \mathcal{T}:\widetilde x_t=T_{x,t}x_t+b_{x,t},
 \qquad t=0,1,\ldots,T
 \label{eq:state_affine_attack}
\end{equation}
where $T_{x,t}\in\mathbb{R}^{n\times n}$ and $b_{x,t}\in\mathbb{R}^{n}$ are variables chosen by the attacker.
The data transformed according to \eqref{eq:state_affine_attack} are denoted by $\widetilde D=\mathcal{T}(D)$.

Next, based on the anomaly score introduced in the previous subsection, we define \textit{S}-invariant stealthiness.
\begin{defn}[\textit{S}-Invariant Stealthiness]
A data transformation $\mathcal{T}$ is said to be \textit{S}-invariant stealthy if
\begin{equation}
    \score\bigl(\mathcal{T}(D)\bigr)=\score(D)
    \label{eq:perfect_stealth_definition}
\end{equation}
holds for any finite-length data $D$.
\end{defn}

In the attack scenario considered in this paper, the attacker has access to the sensor data at each time instant but not to the entire dataset in advance.
Thus, the attacker requires a single transformation rule whose structure itself preserves the anomaly score independently of the specific data values.
The above definition reflects this operational constraint.

The attacker’s objective is to degrade the closed-loop performance of the true system.
We denote the attack objective by $J_{\mathrm{atk}}(\design(\mathcal{T}(D));A,B)$.

We assume a white-box setting regarding the attacker's knowledge.
Specifically, the attacker is assumed to know the controller design method, the definition of the anomaly score, the candidate model sets, and the covariance matrix $\Sigma$.
This setting is intended to capture an advanced persistent threat (APT) scenario, in which a sophisticated attacker acquires detailed knowledge through long-term infiltration.

Accordingly, the data poisoning attack generation problem is formulated as the following constrained optimization problem:
\begin{equation}
\begin{aligned}
 \underset{\mathcal{T}}{\operatorname{maximize}}\quad
 & J_{\mathrm{atk}}\!\left(
     \design(\mathcal{T}(D));A,B
   \right)\\
 \operatorname{subject\ to}\quad
 & \mathcal{T}\ \text{is \textit{S}-invariant stealthy.}
\end{aligned}
\label{eq:conceptual_attack_problem}
\end{equation}

\section{\textit{S}-Invariant Stealthy Attacks}
\label{sec:stealth}


We first characterize the \textit{S}-invariant stealthy transformations under the following assumption.

\begin{assumption}
\label{ass:stealth_characterization}
The following conditions are assumed.
\begin{enumerate}
 \item The set $\mathcal{A}$ of system matrices satisfies
 \begin{equation}
  \forall x\in\R^n\setminus\{0\},\quad
  \exists A_1,A_2\in\mathcal A \quad \text{s.t.} \quad
  \ (A_1-A_2)x\neq0.
  \label{eq:no_common_action_direction}
 \end{equation}

 \item The set $\mathcal A\cap\mathcal C_{m+1}$ is dense in $\mathcal A$, where $\mathcal C_{p}$ for $p \geq 1$ is defined as
 \begin{equation}\label{eq:Cm1}
  \begin{aligned}
    \mathcal{C}_{p} := \Big\{\, A\in\R^{n\times n} \;\Big|\;
      & \exists G\in\R^{n\times p} \text{ s.t.}\\
      & (A,G) \text{ is reachable} \,\Big\}.
  \end{aligned}
 \end{equation}
 

 \item No prior knowledge of $B$ is available, i.e.,
 $\mathcal B=\R^{n\times m}$.
\end{enumerate}
\end{assumption}


The following theorem is the main result of this paper, providing a necessary and sufficient characterization of $S$-invariant stealthy affine transformations.
\begin{theorem}[\textit{S}-Invariant Stealthy Affine Transformations]
\label{thm:perfect_stealth}
Under Assumptions~\ref{ass:prior_knowledge} and~\ref{ass:stealth_characterization},
the time-varying affine transformation \eqref{eq:state_affine_attack} is
\textit{S}-invariant stealthy if and only if there exists
$T_x\in\GL(n)$ satisfying the following conditions.
\begin{enumerate}
 \item Time-invariant linear transformation:
 \begin{equation}
  T_{x,t}=T_x,
  \qquad
  b_{x,t}=0,
  \qquad
  t=0,1,\ldots,T.
  \label{eq:time_invariant_sensor_attack}
 \end{equation}

 \item Covariance invariance:
 \begin{equation}
  T_x^\top\Sigma^{-1}T_x=\Sigma^{-1}.
  \label{eq:weighted_isometry_condition}
 \end{equation}

 \item Invariance of the prior-knowledge set:
 \begin{equation}
  \forall A\in\R^{n\times n},\quad
  A\in\mathcal A
  \iff
  T_xAT_x^{-1}\in\mathcal A.
  \label{eq:prior_invariance_plus}
 \end{equation}
\end{enumerate}
\end{theorem}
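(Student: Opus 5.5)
Sufficiency is a direct change of variables; necessity is where the work lies.

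\emph{Sufficiency.} Assume the three conditions hold. Then $\widetilde X=T_xX$ and $\widetilde Z=T_xZ$. For any candidate model $(A_c,B_c)$,
\[
\widetilde Z-A_c\widetilde X-B_cU=T_x\bigl(Z-T_x^{-1}A_cT_xX-T_x^{-1}B_cU\bigr).
\]
By the covariance invariance condition, $\|T_xM\|_{\Sigma^{-1},\mathrm F}=\|M\|_{\Sigma^{-1},\mathrm F}$ for every $M$. The map $(A_c,B_c)\mapsto(T_x^{-1}A_cT_x,T_x^{-1}B_c)$ is a bijection of $\mathcal A\times\R^{n\times m}$ onto itself: on $\mathcal A$ this follows from the invariance of the prior-knowledge set, and on $\R^{n\times m}$ it holds because $\mathcal B=\R^{n\times m}$. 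Taking the infimum over candidate models therefore gives $S(\widetilde D)=S(D)$.

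\emph{Necessity.} The plan is to probe $S$ with specially constructed data on which $S(D)=0$ and to force $S(\mathcal T(D))=0$. The key observation is that $\mathcal B=\R^{n\times m}$ lets the $B_cU$ term absorb any residual lying in the row space of $U$. Hence $S(D)$ is the infimum over $A_c\in\mathcal A$ of the weighted residual $\|(Z-A_cX)\Pi_U^\perp\|^2_{\Sigma^{-1},\mathrm F}$, where $\Pi_U^\perp$ is the projector onto the orthogonal complement of the row space of $U$.

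\emph{Step 1: $b_{x,t}=0$ and the $T_{x,t}$ are constant.} Take short data sets ($T$ small, or many columns equal to zero) with $x_t=0$ and $U$ chosen so that $\Pi_U^\perp$ isolates a single column. For $D=0$ we have $S(D)=0$, while the transformed data have residual columns $b_{x,t+1}-A_cb_{x,t}$. These must vanish for some $A_c\in\mathcal A$, since $\mathcal A$ is closed and we can argue with exact minimizers or limits. Condition 1 of Assumption~\ref{ass:stealth_characterization} then forces the offsets to vanish. I would take data whose first column is $x_0=0$ with a free $x_1$, so that the score can be computed column by column. Next, choose data consistent with a particular $A_1\in\mathcal A$, namely $x_{t+1}=A_1x_t$, so that $S=0$. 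Stealthiness then requires some $A_c\in\mathcal A$ with $T_{x,t+1}A_1x_t=A_cT_{x,t}x_t$ on the relevant columns. Varying $x_t$, and using the existence of $A_1\neq A_2$ acting differently on each direction, I expect to derive that $T_{x,t}$ is invertible and that $T_{x,t+1}=T_{x,t}$. This is the step I expect to be the main obstacle. The difficulty is that the matrix $A_c$ attaining zero residual depends on the data, so relating transformations at different times requires data rich enough to pin $A_c$ down. This is exactly where the density of $\mathcal A\cap\mathcal C_{m+1}$ enters. For $A_1\in\mathcal C_{m+1}$ with reachable $(A_1,G)$, I would generate long trajectories $x_{t+1}=A_1x_t+Gv_t$, with $m$ columns of $G$ absorbed as inputs and one extra direction used as excitation, so that $X\Pi_U^\perp$ has full row rank. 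Zero residual then determines $A_c=T_xA_1T_x^{-1}$ uniquely, and comparing consecutive time blocks gives time invariance.

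\emph{Step 2: invariance of the prior set.} With $T_x$ now constant and invertible, the argument above shows that $T_xA_1T_x^{-1}\in\mathcal A$ for all $A_1$ in the dense set $\mathcal A\cap\mathcal C_{m+1}$. Closedness of $\mathcal A$ extends this to all of $\mathcal A$. Applying the same argument to the inverse transformation, which is also $S$-invariant since $S(D)=S(\mathcal T(\mathcal T^{-1}D))$, yields the reverse inclusion and hence the equivalence in the invariance of the prior-knowledge set.

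\emph{Step 3: covariance invariance.} Take data with $S(D)>0$ obtained from an arbitrary residual direction. For example, generate exact $A_1$-consistent data and add a perturbation $e$ to one column $z_t$ such that the perturbation cannot be explained by any model; this uses full-rank excitation and the fact that $A_c$ is pinned down. Then $S(D)$ equals $e^\top\Sigma^{-1}e$ up to a controllable projection factor, and $S(\widetilde D)$ equals $(T_xe)^\top\Sigma^{-1}T_xe$ with the same factor. Equality of these two quadratic forms for all $e$ gives $T_x^\top\Sigma^{-1}T_x=\Sigma^{-1}$.
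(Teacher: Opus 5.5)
Your sufficiency argument is correct. Your Step 2 is essentially the paper's step (v): data exactly consistent with some $(A_1,B_1)$, $A_1\in\mathcal A\cap\mathcal C_{m+1}$, whose zero-input columns span $\R^n$, then closedness, density and the inverse map. It works once $T_{x,t}\equiv T_x$ and $b_{x,t}=0$ are known.

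\textbf{The gap: time invariance.} You flag time invariance of $T_{x,t}$ as the main obstacle but do not prove it. The sentence ``zero residual determines $A_c=T_xA_1T_x^{-1}$'' already assumes a constant $T_x$. Worse, your strategy takes information only from zero-input columns and lets $B_c$ absorb the input-carrying ones ($\Pi_U^\perp$ isolating columns, $m$ columns of $G$ used as inputs). No argument of that type can work.

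Take $\mathcal A=\R^{n\times n}$, which satisfies Assumption~\ref{ass:stealth_characterization}, and $T_{x,t}=(-1)^tI$, $b_{x,t}=0$; this satisfies the covariance condition for every $\Sigma$. Whenever the nonzero columns of $U$ are linearly independent, $S(D)=\inf_{A_c}\sum_{t:\,u_t=0}\|x_{t+1}-A_cx_t\|_{\Sigma^{-1},\mathrm F}^2$. The transformed residual at such $t$ is $(-1)^{t+1}(x_{t+1}+A_cx_t)$, so the substitution $A_c\mapsto -A_c$ gives $S(\widetilde D)=S(D)$ on all such data, although $T_{x,t}$ is not constant. This transformation is exposed only by data whose nonzero input columns are linearly dependent, so that one vector $B_cu$ must explain residuals at several times at once.

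That is the missing idea. The paper fixes $u_t=e_1$, so $B_ce_1$ acts as a shared affine offset and $S_{e_1,T}$ is invariant under common translations (Lemma~\ref{lem:common_translation}). This transfers to the time-varying translations $h_t=(T_{x,t}-T_{x,0})c$. Lemma~\ref{lem:zero_propagation} together with condition \eqref{eq:no_common_action_direction} then forces $h_t=0$ inductively.

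\textbf{Two smaller points.} First, your bias argument fails as written. The all-zero dataset only yields $b_{x,t+1}=A_cb_{x,t}$ for \emph{some} $A_c\in\mathcal A$, and $b_{x,t}=A^tb$ with $b\neq0$ satisfies this. Condition \eqref{eq:no_common_action_direction} only bites on a relation holding for \emph{every} $A\in\mathcal A$. The paper gets such a relation in three stages:
\begin{enumerate}
 \item It proves invertibility of each $T_{x,t}$ by a direct kernel argument: two datasets differing by a vector of $\ker T_{x,t}$ coincide after transformation but have different scores.
 \item The family $x_t=-A^tT_{x,0}^{-1}b_{x,0}$, $A\in\mathcal A$, then gives a relation for every $A$, hence $b_{x,0}=0$.
 \item Only then does the all-zero dataset give $b_{x,t}=0$ for $t\geq1$ by induction.
\end{enumerate}

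Second, in Step 3, for a general closed, possibly nonconvex $\mathcal A$, the infimum is not ``$e^\top\Sigma^{-1}e$ up to a projection factor''. Use instead $x_0=x_1=0$, $x_2=v$ with zero input, whose residual is $v$ for every candidate model. Then $S(D)=\|v\|_{\Sigma^{-1},\mathrm F}^2$ and $S(\widetilde D)=\|T_xv\|_{\Sigma^{-1},\mathrm F}^2$.
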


To prove Theorem~\ref{thm:perfect_stealth}, we first establish several lemmas.
Let $e_1=(1,0,\ldots,0)^\top\in\R^m$, and define the anomaly score with the input fixed as $u_t=e_1$ by
\begin{equation}
 \begin{array}{l}
 S_{e_1,T}(x_0,\ldots,x_T)
 :={}
 \\
 \inf_{A_c\in\mathcal A,\,B_c\in\mathcal B}\sum_{t=0}^{T-1}\|x_{t+1}-A_cx_t-B_ce_1\|_{\Sigma^{-1},\mathrm F}^2.
 \end{array}
 \label{eq:fixed_input_score}
\end{equation}

\begin{lemma}[Invariance under Common Translation]
\label{lem:common_translation}
For any $T\geq2$, any state sequence $(x_0,\ldots,x_T)$,
and any $c\in\R^n$,
\begin{equation}
 S_{e_1,T}(x_0+c,\ldots,x_T+c) =S_{e_1,T}(x_0,\ldots,x_T)
 \label{eq:common_translation}
\end{equation}
holds.
\end{lemma}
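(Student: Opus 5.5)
The plan is to absorb the common translation into the free input matrix $B_c$, which Assumption~\ref{ass:stealth_characterization}(3) leaves unconstrained. For a fixed candidate $(A_c,B_c)$, replacing every $x_t$ by $x_t+c$ changes the $t$-th residual to
\[
 (x_{t+1}+c)-A_c(x_t+c)-B_ce_1 = x_{t+1}-A_cx_t-\bigl(B_ce_1-(I-A_c)c\bigr).
\]
The shift therefore acts as a constant offset $(I-A_c)c$ in every residual. Because the input is fixed at $u_t=e_1$ for all $t$, the term $B_ce_1$ is a constant vector in every residual, and it can be chosen freely in $\R^n$.

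The key step is a reparametrization of the infimum. Given $A_c\in\mathcal A$ and $B_c\in\R^{n\times m}$, set
\[
 B_c' := B_c - (I-A_c)c\,e_1^\top .
\]
Since $e_1^\top e_1=1$, we have $B_c'e_1 = B_ce_1-(I-A_c)c$. Also $B_c'\in\mathcal B=\R^{n\times m}$ by Assumption~\ref{ass:stealth_characterization}(3). With this choice the shifted cost at $(A_c,B_c)$ equals the unshifted cost at $(A_c,B_c')$, term by term in the sum defining $S_{e_1,T}$ in \eqref{eq:fixed_input_score}.

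For fixed $A_c$, the map $B_c\mapsto B_c'$ is a bijection of $\R^{n\times m}$; its inverse adds back $(I-A_c)c\,e_1^\top$. Hence the two sets of achievable cost values coincide, first for each fixed $A_c$ and then after taking the union over $A_c\in\mathcal A$. Taking infima over the same value set gives \eqref{eq:common_translation}. The only point needing any care is that the bijection depends on $A_c$. This causes no difficulty, because $\mathcal A$ itself is untouched and the infimum can be taken jointly as an iterated infimum over $A_c$ and then $B_c$. Neither the condition $T\ge2$ nor the closedness of $\mathcal A$ is needed for this argument, so the main obstacle is merely bookkeeping.
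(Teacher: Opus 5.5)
Your proof is correct and is essentially the paper's argument: both absorb the constant offset $(I-A_c)c$ into the unconstrained input matrix through the rank-one correction $(I-A_c)c\,e_1^\top$, and then use that this is a bijection of $\mathcal B=\R^{n\times m}$ for each fixed $A_c$. The only difference is that you run the reparametrization in the opposite direction (the paper uses $B_c+(I-A_c)c\,e_1^\top$), which makes no difference to the argument.
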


\begin{proof}
For any $(A_c,B_c)\in\mathcal A\times\mathcal B$, let
$B_c':=B_c+(I-A_c)c\,e_1^\top$.
Then,
$B_c'e_1=B_ce_1+(I-A_c)c$,
 and hence, for each $t$,
\[
 (x_{t+1}+c)-A_c(x_t+c)-B_c'e_1
 =
 x_{t+1}-A_cx_t-B_ce_1.
\]
Since $\mathcal B=\R^{n\times m}$,
the mapping $B_c\mapsto B_c'$ is a bijection on $\mathcal B$ for any fixed $A_c$.
Taking the infimum over all candidate models therefore yields
\eqref{eq:common_translation}.
\end{proof}

\begin{lemma}[Zero-State Propagation]
\label{lem:zero_propagation}
Let $T\geq2$ and $0\leq i<j\leq T-1$.
If a state sequence $(y_0,\ldots,y_T)$ satisfies
\[
 S_{e_1,T}(y_0,\ldots,y_T)=0,
 \quad
 y_i=y_{i+1}=0,
 \quad
 y_j=0,
\]
then $y_{j+1}=0$.
\end{lemma}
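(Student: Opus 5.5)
The plan is to work directly with a minimizing sequence of candidate models and read off the two residuals at times $i$ and $j$. I will not assume the infimum in \eqref{eq:fixed_input_score} is attained: $\mathcal A$ is only closed, not compact, and $\mathcal B=\R^{n\times m}$ is unbounded.

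Since $S_{e_1,T}(y_0,\ldots,y_T)=0$, I choose a sequence $(A_k,B_k)\in\mathcal A\times\mathcal B$ such that
\[
 \sum_{t=0}^{T-1}\|y_{t+1}-A_ky_t-B_ke_1\|_{\Sigma^{-1},\mathrm F}^2\to 0 .
\]
Each summand is nonnegative, and $\Sigma^{-1}\succ0$, so the $\Sigma^{-1}$-weighted norm is equivalent to the Euclidean norm. Hence every individual residual $y_{t+1}-A_ky_t-B_ke_1$ tends to $0$ in $\R^n$ for each $t\in\{0,\ldots,T-1\}$.

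First consider the index $t=i$, which is admissible because $i\le T-1$. Since $y_i=y_{i+1}=0$, the residual reduces to $-B_ke_1$, and therefore $B_ke_1\to0$. The point is that the term $A_ky_i$ vanishes identically, so the possibly unbounded $A_k$ plays no role.

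Next consider $t=j$, which is admissible because $j\le T-1$. Since $y_j=0$, the residual is $y_{j+1}-B_ke_1$, and it tends to $0$. Combining this with $B_ke_1\to0$ gives $y_{j+1}=\lim_k B_ke_1=0$, as claimed.

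The only delicate point is the non-attainment of the infimum. Working with a minimizing sequence and using only the zero states $y_i$ and $y_j$, which kill the $A_k$ terms, removes any need for boundedness of $A_k$. The hypothesis $i<j$ only ensures that the two indices are distinct, and the argument does not otherwise use it.
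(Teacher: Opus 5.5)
Your proof is correct and uses the same idea as the paper: the residuals at times $i$ and $j$ reduce to $-B_ce_1$ and $y_{j+1}-B_ce_1$, with no dependence on $A_c$. The paper packages this as the lower bound $S_{e_1,T}\geq\inf_{\beta}\bigl(\|\beta\|_{\Sigma^{-1},\mathrm F}^2+\|y_{j+1}-\beta\|_{\Sigma^{-1},\mathrm F}^2\bigr)=\tfrac12\|y_{j+1}\|_{\Sigma^{-1},\mathrm F}^2$, valid for every candidate model, which sidesteps non-attainment of the infimum just as your minimizing-sequence argument does.
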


\begin{proof}
For any $B_c\in\mathcal B$, let
$\beta:=B_ce_1$.
Since $y_i=y_{i+1}=y_j=0$, the residuals at times $i$ and $j$ are
$-\beta$ and $y_{j+1}-\beta$, respectively.
Therefore,
\begin{align*}
 S_{e_1,T}(y_0,\ldots,y_T)
 &\geq
 \inf_{\beta\in\R^n}
 \left(
  \|\beta\|_{\Sigma^{-1},\mathrm F}^2
 +
  \|y_{j+1}-\beta\|_{\Sigma^{-1},\mathrm F}^2
 \right)\\
 &=
 \frac12\|y_{j+1}\|_{\Sigma^{-1},\mathrm F}^2.
\end{align*}
Since the left-hand side is zero and $\Sigma\succ0$,
we obtain $y_{j+1}=0$.
\end{proof}

We now prove Theorem~\ref{thm:perfect_stealth} using the above lemmas.
In the proof, the subscript $T$ indicates the data length

\begin{proof}[Theorem~\ref{thm:perfect_stealth}]
(Sufficiency)
Suppose that the three conditions in the theorem hold.
For any $(A_c,B_c)\in\mathcal A\times\mathcal B$, define
$\widehat A_c:=T_xA_cT_x^{-1}$ and $\widehat B_c:=T_xB_c$.
By \eqref{eq:prior_invariance_plus} and Assumption~\ref{ass:stealth_characterization}, this correspondence is a bijection on the candidate model set
$\mathcal A\times\mathcal B$.
Moreover,
\begin{align*}
 \widetilde Z_T
 -\widehat A_c\widetilde X_T
 -\widehat B_cU_T
 &=
 T_x(Z_T-A_c X_T-B_c U_T).
\end{align*}
Hence, by \eqref{eq:weighted_isometry_condition},
\[
 \|\widetilde Z_T
 -\widehat A_c\widetilde X_T
 -\widehat B_cU_T\|_{\Sigma^{-1},\mathrm F}^2
 =
 \|Z_T-A_c X_T-B_c U_T\|_{\Sigma^{-1},\mathrm F}^2.
\]
Taking the infimum over the bijective correspondence between the candidate models gives
$S_T(\widetilde D_T)=S_T(D_T)$.

(Necessity)
Suppose that \eqref{eq:perfect_stealth_definition} holds for any finite-length data.

\indent\textit{(i) Non-singularity of $T_{x,t}$:}
First, let $t\geq1$.
Fix $T\geq\max\{2,t\}$ and suppose that there exists $v\neq0$ such that
$T_{x,t}v=0$.
Consider data for which $x_t=v$ is the only nonzero state and all other states and inputs are zero.
The residual at time $t-1$ is $v$ for every candidate model, and hence
$
 S_T(D_T)\geq\|v\|_{\Sigma^{-1},\mathrm F}^2>0.
$
On the other hand, let $D_T^0$ denote the data for which all states and inputs are zero.
Since $T_{x,t}v=0$, the two datasets become identical after transformation.
Moreover, $S_T(D_T^0)=0$, and therefore
\eqref{eq:perfect_stealth_definition} implies
$S_T(D_T)=0$, which is a contradiction.
Thus, $T_{x,t}$ is invertible for all $t\geq1$.

Next, suppose that there exists $v\neq0$ satisfying $T_{x,0}v=0$.
By \eqref{eq:no_common_action_direction}, there exists
$A\in\mathcal A$ such that $Av\neq0$.
For $T=2$, define the data $D^1$ by
$
 x_0=v,
 x_{s+1}=Ax_s,
 u_s=0,
 s=0,1
$.
Then $S_2(D^1)=0$.
Let $D^2$ be obtained from $D^1$ by replacing only $x_0$ with zero.
Since the residual at time $0$ is $Av$ for every candidate model,
\begin{equation}
 S_2(D^2)\geq\|Av\|_{\Sigma^{-1},\mathrm F}^2>0.
 \label{eq:ineq_S2D2}
\end{equation}
Since $T_{x,0}v=0$, we have
$\widetilde D^1=\widetilde D^2$.
Therefore, \eqref{eq:perfect_stealth_definition} gives
$S_2(\widetilde D^2)=S_2(\widetilde D^1)=0$, which contradicts
\eqref{eq:ineq_S2D2}.
Hence,
\begin{equation}
 T_{x,t}\in\GL(n),
 \qquad
 t=0,1,\ldots .
 \label{eq:T_invertible}
\end{equation}

\indent\textit{(ii) Vanishing of the bias terms:}
For any $A\in\mathcal A$, define
$
 x_t=-A^t T_{x,0}^{-1}b_{x,0},
 u_t=0
$.
Then $x_{t+1}=Ax_t$, and hence
$S_T(D_T)=0$.
Since $\widetilde x_0=0$ after transformation, the residual at time $0$ is
$\widetilde x_1$ independently of the candidate model.
Thus, \eqref{eq:perfect_stealth_definition} implies
$\widetilde x_1=0$, and therefore for any $A\in\mathcal A$, $T_{x,1}A T_{x,0}^{-1}b_{x,0}=b_{x,1}$.
Taking the difference between any $A_1,A_2\in\mathcal A$ and using
\eqref{eq:T_invertible} gives
\[
 (A_1-A_2)T_{x,0}^{-1}b_{x,0}=0,
 \qquad
 \forall A_1,A_2\in\mathcal A.
\]
It follows from \eqref{eq:no_common_action_direction} that
$b_{x,0}=0$.

Furthermore, consider the all-zero dataset.
The transformed state sequence is
$b_{x,0},b_{x,1},\ldots,b_{x,T}$, and its anomaly score is also zero.
Since $b_{x,0}=0$, the residual at time $0$ is $b_{x,1}$, implying
$b_{x,1}=0$.
Similarly, if $b_{x,0}=\cdots=b_{x,k}=0$, the residual at time $k$ is
$b_{x,k+1}$.
Thus, by induction, we obtain
\begin{equation}
 b_{x,t}=0,
 \qquad
 t=0,1,\ldots.
 \label{eq:b_zero}
\end{equation}

\indent\textit{(iii) Time invariance of $T_{x,t}$:}
Fix arbitrary $T\geq2$ and $c\in\R^n$, and define
\begin{equation}
 h_t:=(T_{x,t}-T_{x,0})c,
 \qquad
 t=0,\ldots,T.
 \label{eq:h_definition}
\end{equation}
We first show that, for any state sequence $(y_0,\ldots,y_T)$,
\begin{equation}
 S_{e_1,T}(y_0+h_0,\ldots,y_T+h_T)
 =
 S_{e_1,T}(y_0,\ldots,y_T).
 \label{eq:transformed_translation}
\end{equation}

By \eqref{eq:T_invertible}, let
$z_t:=T_{x,t}^{-1}y_t$.
From Lemma~\ref{lem:common_translation},
\[
 S_{e_1,T}(z_0+c,\ldots,z_T+c)
 =
 S_{e_1,T}(z_0,\ldots,z_T).
\]
Applying \eqref{eq:b_zero} and anomaly-score invariance to both datasets gives
$
 S_{e_1,T}(y_0+T_{x,0}c,\ldots,y_T+T_{x,T}c)
 =
 S_{e_1,T}(y_0,\ldots,y_T).
$
Adding the common translation $-T_{x,0}c$ to all states on the left-hand side and applying Lemma~\ref{lem:common_translation} again yields
\eqref{eq:transformed_translation}.

Next, fix any $A_0\in\mathcal A$ and define
$
 z_0=0,
 z_{t+1}=A_0z_t+h_1.
$
Choosing the candidate model
$A_c=A_0$, $B_c=h_1e_1^\top$ gives
$S_{e_1,T}(z_0,\ldots,z_T)=0$.
Let $w_t:=z_t-h_t$.
Then, by \eqref{eq:transformed_translation},
$S_{e_1,T}(w_0,\ldots,w_T)=0$.
Moreover, since $h_0=0$, $z_0=0$, and $z_1=h_1$, we have
$w_0=w_1=0$.
Applying Lemma~\ref{lem:zero_propagation} with $i=0$ and $j=1$ yields
$w_2=0$.
On the other hand, since
$z_2=A_0h_1+h_1$,
\begin{equation}
 A_0h_1=h_2-h_1.
 \label{eq:h1_relation}
\end{equation}

Since $A_0\in\mathcal A$ is arbitrary,
$(A_1-A_2)h_1=0$
for any $A_1,A_2\in\mathcal A$.
By \eqref{eq:no_common_action_direction},
$h_1=0$, and \eqref{eq:h1_relation} then gives
$h_2=0$.

Suppose further that
$h_0=\cdots=h_k=0$ for some $k\in\{2,\ldots,T-1\}$.
Applying \eqref{eq:transformed_translation} to the all-zero state sequence yields
$S_{e_1,T}(h_0,\ldots,h_T)=0$.
Since $h_0=h_1=h_k=0$,
Lemma~\ref{lem:zero_propagation} with $i=0$ and $j=k$ gives
$h_{k+1}=0$.
Therefore, by induction,
$h_t=0$ for $t=0,\ldots,T$.
Since $c$ is arbitrary,
$T_{x,t}=T_{x,0}$ for $t=0,\ldots,T$.
Since $T\geq2$ is arbitrary, there exists $T_x\in\GL(n)$ such that
\begin{equation}
 T_{x,t}=T_x,
 \qquad
 t=0,1,\ldots .
 \label{eq:T_time_invariant}
\end{equation}

\indent\textit{(iv) Covariance invariance:}
Let $T=2$ and set
$x_0=x_1=0,x_2=v,u_0=u_1=0$.
Then
$S_2(D_2)=\|v\|_{\Sigma^{-1},\mathrm F}^2$.
By \eqref{eq:b_zero} and
\eqref{eq:T_time_invariant},
$S_2(\widetilde D_2)=\|T_xv\|_{\Sigma^{-1},\mathrm F}^2$.
Thus, for any $v\in\R^n$, \eqref{eq:perfect_stealth_definition} gives
$
 \|T_xv\|_{\Sigma^{-1},\mathrm F}^2 = \|v\|_{\Sigma^{-1},\mathrm F}^2.
$
Therefore, $ T_x^\top {\Sigma^{-1}}T_x={\Sigma^{-1}}$.

\indent\textit{(v) Invariance of the prior-knowledge set:}
First, fix
$A_0\in\mathcal A\cap\mathcal C_{m+1}$.
By \eqref{eq:Cm1}, there exist $g_0,\ldots,g_m\in\R^n$ such that, with
$
 G_0:=
 \begin{bmatrix}
  g_0 & \cdots & g_m
 \end{bmatrix}
$,
we have
\begin{equation}
 \operatorname{rank}
 \begin{bmatrix}
  G_0 & A_0G_0 & \cdots & A_0^{n-1}G_0
 \end{bmatrix}
 =n.
 \label{eq:rank_identification}
\end{equation}

For $i=1,\ldots,m$, let
$e_i\in\R^m$ denote the $i$th standard basis vector.
Define $B_0\in\R^{n\times m}$ by
\begin{equation}
 B_0e_i
 :=
 g_i-A_0^{n+1}g_{i-1},
 \qquad
 i=1,\ldots,m.
 \label{eq:B0_definition}
\end{equation}
Also, let
$
 T_\star:=(m+1)(n+1)-1,
 \tau_i:=i(n+1),
 i=0,\ldots,m.
$
For each $i=0,\ldots,m$ and $k=0,\ldots,n$, define
$x_{\tau_i+k}:=A_0^kg_i$.
Within each interval, let
$
 u_{\tau_i+k}=0,
 k=0,\ldots,n-1,
$
and between consecutive intervals, let
$
 u_{\tau_i+n}=e_{i+1},
 i=0,\ldots,m-1
$.
Within each interval,
$x_{\tau_i+k+1}=A_0x_{\tau_i+k}+B_0u_{\tau_i+k}$.
At the transitions between intervals, \eqref{eq:B0_definition} gives
\[
 A_0x_{\tau_i+n}+B_0u_{\tau_i+n}
 =
 A_0^{n+1}g_i+B_0e_{i+1}
 =
 g_{i+1} =
 x_{\tau_i+n+1}.
\]
Hence, this single dataset satisfies
$x_{t+1}=A_0x_t+B_0u_t$
at every $t$, and thus
$S_{T_\star}(D_{T_\star})=0$.
By anomaly-score invariance,
\begin{equation}
 S_{T_\star}(\widetilde D_{T_\star})=0.
 \label{eq:transformed_zero_score}
\end{equation}
Within each interval, the input is zero.
Furthermore, by \eqref{eq:rank_identification} and
$T_x\in\GL(n)$,
\begin{equation}
 \operatorname{span}
 \left\{
 T_xA_0^kg_i:
 i=0,\ldots,m,\;
 k=0,\ldots,n-1
 \right\}
 =
 \R^n.
 \label{eq:transformed_state_span}
\end{equation}
Thus, from \eqref{eq:transformed_zero_score},
\eqref{eq:transformed_state_span}, and
Assumption~\ref{ass:prior_knowledge},
there exists $\widehat A\in\mathcal A$ such that, on each zero-input interval,
\begin{equation}
 T_xA_0^{k+1}g_i
 =
 \widehat A T_xA_0^kg_i,
 \quad
 i=0,\ldots,m,\;
 k=0,\ldots,n-1.
 \label{eq:zero_residual_identification}
\end{equation}
On the other hand, define
$
 M:=T_xA_0T_x^{-1}.
$
Then
$MT_xA_0^kg_i = T_xA_0^{k+1}g_i$.
Hence, \eqref{eq:zero_residual_identification} implies
$(\widehat A-M)T_xA_0^kg_i=0$
for all $i$ and $k$.
By \eqref{eq:transformed_state_span},
$\widehat A=M=T_xA_0T_x^{-1}$.
Since $\widehat A\in\mathcal A$,
\begin{equation}
 A_0\in\mathcal A\cap\mathcal C_{m+1}
 \quad\Longrightarrow\quad
 T_xA_0T_x^{-1}\in\mathcal A.
 \label{eq:prior_on_dense_set}
\end{equation}

Next, by Assumption~\ref{ass:stealth_characterization},
$\mathcal A\cap\mathcal C_{m+1}$ is dense in $\mathcal A$.
The similarity transformation
$A\mapsto T_xAT_x^{-1}$
is continuous, and $\mathcal A$ is closed.
Therefore, \eqref{eq:prior_on_dense_set} holds for every
$A\in\mathcal A$, i.e.,
\begin{equation}
 A\in\mathcal A
 \quad\Longrightarrow\quad
 T_xAT_x^{-1}\in\mathcal A.
 \label{eq:prior_forward}
\end{equation}

Finally, since $T_x$ is invertible and anomaly-score invariance holds for every finite-length dataset, the inverse transformation
$x_t\mapsto T_x^{-1}x_t$
also preserves the anomaly score.
Applying the same argument to $T_x^{-1}$ gives
$
 A\in\mathcal A
 \Longrightarrow
 T_x^{-1}AT_x\in\mathcal A.
$
Equivalently,
$
 T_xAT_x^{-1}\in\mathcal A
 \Longrightarrow
 A\in\mathcal A.
$
Combining this implication with \eqref{eq:prior_forward} yields
$
 A\in\mathcal A
 \iff
 T_xAT_x^{-1}\in\mathcal A.
$
\end{proof}

Theorem 1 shows that requiring anomaly-score preservation for arbitrary data restricts the class of admissible affine attacks. Specifically, 
condition~\eqref{eq:time_invariant_sensor_attack} shows that any affine transformation satisfying \textit{S}-invariant stealthiness must be time invariant and have no bias term.
Condition~\eqref{eq:weighted_isometry_condition} states that the sum of squared Mahalanobis distances of the state residuals is preserved under the transformation $T_x$.
Condition~\eqref{eq:prior_invariance_plus} ensures that the candidate models explaining the transformed data remain within the prior-knowledge set.
Many prior-knowledge sets in control satisfy \eqref{eq:prior_invariance_plus} for all $T_x\in\GL(n)$. For example, similarity transformations preserve the spectral radius
$\rho(T_xA_cT_x^{-1})=\rho(A_c)$.

\begin{remark}[On the Assumed Prior-Knowledge Set]
The first two conditions in Assumption~\ref{ass:stealth_characterization} may not be restrictive in practice.
First, if the first condition does not hold, there exist
$x\in\mathbb R^n\setminus\{0\}$ and $y\in\mathbb R^n$
such that every $A\in\mathcal A$ satisfies the linear equation $Ax=y$.
Thus, this condition means that the prior-knowledge set is not contained in the solution set of any equation $Ax=y$ with $x\neq0$.
Next, the second condition requires that matrices whose eigenvalues have geometric multiplicity at most $m+1$ be dense in $\mathcal{A}$~\cite{zhou2017minimal}. As shown in Theorem \ref{theorem:clousure_condition} in the Appendix, this condition holds for a broad class of prior-knowledge sets, including common convex uncertainty sets under mild conditions.
\end{remark}

\begin{remark}[Apparent System Model]
The proposed transformation can be interpreted as spoofing the system dynamics.
To clarify this interpretation, consider the noise-free relation $Z=AX+BU$.
Under the transformation
$\widetilde X=T_xX$, $\widetilde Z=T_xZ$, and $\widetilde U=U$,
we obtain
\begin{equation}
 \widetilde Z
 =
 (T_xAT_x^{-1})\widetilde X+(T_xB)\widetilde U.
\end{equation}
Thus, the manipulated data are consistent with the apparent system
$
 \widetilde A=T_xAT_x^{-1}, \widetilde B=T_xB
$.
Hence, the proposed transformation can be viewed as spoofing the system model perceived by the designer.
For example, choosing $T_x=-I$ yields
$\widetilde A=A$ and $\widetilde B=-B$,
and particularly in the LQR design, replacing $B$ with $-B$ reverses the sign of the resulting feedback gain, i.e., $\widetilde K=-K$.
\end{remark}

\section{Attack Design over Orthogonal Group}\label{sec:optimization}

\subsection{Formulation as Optimization over Riemannian Manifold}
In this section, we develop a method for designing data poisoning attacks using the characterization from the previous section.
First, we show that the transformations in Theorem~\ref{thm:perfect_stealth} can be parameterized using orthogonal matrices.
\begin{prop}[Parameterization by Orthogonal Matrices]\label{thm:weighted_orthogonal_parameterization}
A transformation matrix $T_x\in\GL(n)$ satisfies $T_x^\top\Sigma^{-1}T_x=\Sigma^{-1}$
if and only if there exists $\Theta\in\Ogroup(n)$ such that it can be expressed as
\begin{equation}
 T_x(\Theta)=\Psi^{-1}\Theta\Psi
 \label{eq:weighted_orthogonal_parameterization}
\end{equation}
where $\Psi\in\GL(n)$ is the Cholesky factor of $\Sigma^{-1}$ satisfying $\Sigma^{-1}=\Psi^\top\Psi$.
\end{prop}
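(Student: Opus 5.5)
The plan is to show both directions by conjugating with $\Psi$, which turns the weighted isometry condition into ordinary orthogonality. Since $\Sigma\succ0$, the matrix $\Sigma^{-1}$ is symmetric positive definite and therefore has a Cholesky factorization $\Sigma^{-1}=\Psi^\top\Psi$ with $\Psi$ invertible (upper triangular with positive diagonal). This gives $\Psi\in\GL(n)$, so $\Psi^{-1}$ exists and the expression \eqref{eq:weighted_orthogonal_parameterization} is well defined.

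For sufficiency, take $\Theta\in\Ogroup(n)$ and set $T_x=\Psi^{-1}\Theta\Psi$. Then $T_x$ is a product of invertible matrices, so $T_x\in\GL(n)$. We compute
\[
T_x^\top\Sigma^{-1}T_x=\Psi^\top\Theta^\top\Psi^{-\top}\Psi^\top\Psi\Psi^{-1}\Theta\Psi=\Psi^\top\Theta^\top\Theta\Psi=\Psi^\top\Psi=\Sigma^{-1},
\]
where the middle cancellations use $\Psi^{-\top}\Psi^\top=I$ and $\Psi\Psi^{-1}=I$, and the next step uses $\Theta^\top\Theta=I$.

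For necessity, given $T_x$ with $T_x^\top\Sigma^{-1}T_x=\Sigma^{-1}$, define $\Theta:=\Psi T_x\Psi^{-1}$, so that $T_x=\Psi^{-1}\Theta\Psi$ holds by construction. It remains to check that $\Theta$ is orthogonal:
\[
\Theta^\top\Theta=\Psi^{-\top}T_x^\top\Psi^\top\Psi T_x\Psi^{-1}=\Psi^{-\top}\bigl(T_x^\top\Sigma^{-1}T_x\bigr)\Psi^{-1}=\Psi^{-\top}\Psi^\top\Psi\Psi^{-1}=I.
\]
Since $\Theta$ is square, $\Theta^\top\Theta=I$ already implies $\Theta\Theta^\top=I$, so $\Theta\in\Ogroup(n)$.

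There is no real obstacle in either direction. The only point needing care is the invertibility of $\Psi$, which is guaranteed by the positive definiteness of $\Sigma^{-1}$. We may also note that the map $\Theta\mapsto\Psi^{-1}\Theta\Psi$ is a bijection onto the set of weighted isometries, with inverse $T_x\mapsto\Psi T_x\Psi^{-1}$. This bijection is what justifies optimizing over $\Ogroup(n)$ afterward.
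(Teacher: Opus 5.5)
Your proposal is correct and follows the paper's proof essentially verbatim: for necessity you conjugate by the Cholesky factor via $\Theta:=\Psi T_x\Psi^{-1}$, and for sufficiency you compute $T_x^\top\Sigma^{-1}T_x$ directly. Your extra remarks (invertibility of $\Psi$, $\Theta^\top\Theta=I$ sufficing for square matrices, and $T_x\in\GL(n)$) just make explicit details the paper leaves implicit.
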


\begin{proof}
First, suppose that $T_x$ satisfies $T_x^\top\Sigma^{-1}T_x=\Sigma^{-1}$.
Let
$\Theta:=\Psi T_x\Psi^{-1}$.
Then
$\Theta^\top\Theta=I$.
Therefore, $\Theta\in\Ogroup(n)$, and
$T_x$ can be expressed as $T_x=\Psi^{-1}\Theta\Psi$.
Conversely, suppose that
$T_x=\Psi^{-1}\Theta\Psi$ for some $\Theta\in\Ogroup(n)$.
Then,
\begin{align*}
 T_x^\top\Sigma^{-1}T_x
 &=
 \left(\Psi^{-1}\Theta\Psi\right)^\top
 \Psi^\top\Psi
 \left(\Psi^{-1}\Theta\Psi\right) \nonumber\\
 &=
 \Psi^\top\Theta^\top\Theta\Psi
 =
 \Psi^\top\Psi
 =
 \Sigma^{-1}.
\end{align*}
\end{proof}

By Theorem~\ref{thm:weighted_orthogonal_parameterization}, \textit{S}-invariant stealthy sensor attack design can be treated as a problem of selecting a matrix $\Theta\in\Ogroup(n)$ on the orthogonal group.
Under this parameterization, we propose a method for maximizing the attacker's objective function $J_{\mathrm{atk}}(\design(\mathcal{T}(D));A,B)$ while preserving \textit{S}-invariant stealthiness.
Here, the condition \eqref{eq:prior_invariance_plus} on the prior-knowledge set depends on $\Theta$, but in this section, we impose the following assumption on the prior knowledge.
\begin{assumption}\label{ass:prior_knowledge_sufficient}
For every $\Theta\in\Ogroup(n)$, the corresponding transformation $T_x(\Theta)$ satisfies \eqref{eq:prior_invariance_plus}.
\end{assumption}

As discussed in Section~\ref{sec:stealth}, many prior-knowledge sets naturally used in control satisfy Assumption~\ref{ass:prior_knowledge_sufficient}.

Using \eqref{eq:weighted_orthogonal_parameterization}, we write the manipulated data as
$
 D(\Theta):= \left(T_x(\Theta)X,\ T_x(\Theta)Z,\ U\right).
$
Then, the data poisoning attack design problem can be formulated as follows.
\begin{prop}[Reformulation using Orthogonal Group]
\label{thm:apparent_system}
Suppose that Assumptions 1--3 hold for the prior knowledge.
Define the attack objective function as
$
 \Phi(\Theta;A,B)
 :=
 J_{\mathrm{atk}}\left(\design(D(\Theta));A,B\right)
$.
Then, the \textit{S}-invariant stealthy attack design problem~\eqref{eq:conceptual_attack_problem} is equivalent to
\begin{equation}
 \underset{\Theta\in\Ogroup(n)}{\operatorname{maximize}}
 \quad
 \Phi(\Theta;A,B).
 \label{eq:orthogonal_attack_optimization}
\end{equation}
\end{prop}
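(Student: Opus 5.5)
Our approach is to show that the feasible set of \eqref{eq:conceptual_attack_problem} is exactly $\{\mathcal{T}_\Theta:\Theta\in\Ogroup(n)\}$, where $\mathcal{T}_\Theta$ denotes the time-invariant linear map $\widetilde x_t=T_x(\Theta)x_t$. We also show that on this set the objective of \eqref{eq:conceptual_attack_problem} coincides with $\Phi$. Equivalence of the two problems then follows: the optimal values agree, and maximizers correspond, since any maximizer $\Theta^\star$ of \eqref{eq:orthogonal_attack_optimization} yields the maximizer $\mathcal{T}_{\Theta^\star}$ of \eqref{eq:conceptual_attack_problem}, and conversely.

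First, the inclusion ``feasible $\Rightarrow$ parameterized''. Let $\mathcal{T}$ be a time-varying affine transformation of the form \eqref{eq:state_affine_attack} that is $S$-invariant stealthy. Assumptions~\ref{ass:prior_knowledge} and~\ref{ass:stealth_characterization} are in force, so the necessity part of Theorem~\ref{thm:perfect_stealth} applies. It gives $T_x\in\GL(n)$ with $T_{x,t}=T_x$ and $b_{x,t}=0$ for all $t$, and with $T_x^\top\Sigma^{-1}T_x=\Sigma^{-1}$. By Proposition~\ref{thm:weighted_orthogonal_parameterization}, $T_x=\Psi^{-1}\Theta\Psi$ for some $\Theta\in\Ogroup(n)$. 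Hence $\mathcal{T}(D)=(T_x(\Theta)X,T_x(\Theta)Z,U)=D(\Theta)$. Here we use that the attack leaves $U$ untouched, and that the same $T_x$ acts on both $X$ and $Z$ because they share the states $x_1,\dots,x_{T-1}$.

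Next, the converse inclusion. Fix $\Theta\in\Ogroup(n)$ and set $T_x:=T_x(\Theta)$. By Proposition~\ref{thm:weighted_orthogonal_parameterization}, $T_x\in\GL(n)$ and it satisfies \eqref{eq:weighted_isometry_condition}. Assumption~\ref{ass:prior_knowledge_sufficient} gives \eqref{eq:prior_invariance_plus}, and \eqref{eq:time_invariant_sensor_attack} holds by construction. The sufficiency part of Theorem~\ref{thm:perfect_stealth} therefore shows that $\mathcal{T}_\Theta$ is $S$-invariant stealthy. Thus $\mathcal{T}_\Theta$ is feasible, and $J_{\mathrm{atk}}(\design(\mathcal{T}_\Theta(D));A,B)=\Phi(\Theta;A,B)$ by definition.

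There is no genuine obstacle, since the argument is bookkeeping on top of Theorem~\ref{thm:perfect_stealth}. The step needing the most care is the precise meaning of ``equivalent''. The map $\Theta\mapsto\mathcal{T}_\Theta$ is surjective onto the feasible set, and it is in fact injective because $\Psi$ is invertible. It therefore induces a bijection between feasible sets that preserves the objective values. This gives equality of suprema and a correspondence of optimal solutions, whether or not a maximum is attained.
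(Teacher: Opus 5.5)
Your proposal is correct and matches the paper's argument. The paper's one-line proof invokes the correspondence $T_x=\Psi^{-1}\Theta\Psi$ together with Theorem~\ref{thm:perfect_stealth}, Proposition~\ref{thm:weighted_orthogonal_parameterization}, and Assumption~\ref{ass:prior_knowledge_sufficient}; you spell out the same ingredients (both inclusions and the equality of objective values on the feasible set) explicitly, and the injectivity of $\Theta\mapsto\mathcal{T}_\Theta$ is true but not needed for the equivalence.
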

\begin{proof}
    From the above discussion, the equivalence follows by considering the correspondence $T_{x}=\Psi^{-1}\Theta\Psi$.
\end{proof}

We now clarify how the proposed attack can be applied to previously unseen data.  While S-invariant stealthiness is guaranteed independently of the collected data, the attack objective in~\eqref{eq:orthogonal_attack_optimization} depends on the dataset used for optimization. Thus, the resulting transformation preserves the anomaly score even when applied to previously unseen data, although its attack effectiveness on such data is not necessarily guaranteed.

We next develop a method for solving the optimization problem ~\eqref{eq:orthogonal_attack_optimization}. Since the feasible set is the orthogonal group $\Ogroup(n)$, it is difficult to directly apply standard constrained optimization algorithms. 
Therefore, we regard the orthogonal group as a Riemannian manifold and adopt a framework for optimization on manifolds~\cite{absil2008optimization,edelman1998geometry}.

At each iteration of a gradient method on a manifold, the Euclidean gradient is computed and projected onto the tangent space at the current point to obtain the Riemannian gradient.
Finally, a mapping called a retraction is used to bring the updated point back onto the orthogonal group.

Let the Euclidean gradient at the current point $\Theta_k \in \Ogroup(n)$ be
$
 G_k=\nabla_{\Theta}\Phi(\Theta_k;A,B).
$
The specific method for computing this gradient depends on the structure of the mapping $\design$, but we assume that the gradient can be evaluated.
For example, when $\design$ is given as the solution to an optimization problem, the gradient may be derived using implicit differentiation based on the optimality conditions and the chain rule~\cite{sasahara2026adversarial}.
Furthermore, this gradient can be computed efficiently by applying automatic differentiation~\cite{NEURIPS2019_9ce3c52f}.
The Riemannian gradient $H_k$, obtained by projecting the Euclidean gradient $G_k$ onto the tangent space of the orthogonal group, is given by
$
 H_k
 =
 G_k-\Theta_k\sym(\Theta_k^\top G_k).
$
For a step size $\alpha>0$, let
$Y_k=\Theta_k+\alpha H_k$
and write its singular value decomposition as
$Y_k=P_k\Lambda_kQ_k^\top$.
Then, the update
$
\Theta_{k+1}=P_kQ_k^\top
$
projects $Y_k$ onto its nearest orthogonal matrix.

\subsection{Extension to Attacks Close to the Original Data}
\label{subsec:regularized_attack}
In the optimization problem \eqref{eq:orthogonal_attack_optimization}, the attack objective function $\Phi(\Theta;A,B)$ is maximized over the class of transformations satisfying \textit{S}-invariant stealthiness.
However, to more reliably prevent the attack from being detected, one can consider suppressing excessive deviations from the original data in addition to preserving anomaly-score invariance.
Therefore, in this subsection, we introduce a regularization term into the objective function to reduce the magnitude of data manipulation.

The difference between the clean data and the manipulated data is
$
 D(\Theta)-D
 =
 \left(
 (T_x(\Theta)-I)X,\ 
 (T_x(\Theta)-I)Z,\ 
 0
 \right).
$
Therefore, the magnitude of data manipulation can be bounded as
\begin{equation*}
 \norm{D(\Theta)-D}_{\mathrm F}^2 \leq
 \norm{T_x(\Theta)-I}_{\mathrm F}^2
 \left(
 \norm{X}_{\mathrm F}^2+\norm{Z}_{\mathrm F}^2
 \right).
 \label{eq:data_difference_upper_bound}
\end{equation*}
Thus, by bringing $T_x(\Theta)$ closer to the identity matrix,
the manipulated data can also be expected to become closer to the original data.
Based on this observation, we consider an optimization problem with the following regularized objective function involving a parameter $\gamma\geq0$:
\begin{equation*}
 \underset{\Theta\in\Ogroup(n)}{\operatorname{maximize}}
 \quad
  \Phi_{\gamma}(\Theta;A,B)
 =
 \Phi(\Theta;A,B)
 -
 \gamma
 \norm{T_x(\Theta)-I}_{\mathrm F}^2.
\label{eq:orthogonal_attack_optimization_reg}
\end{equation*}
Here, \textit{S}-invariant stealthiness is preserved by $\Theta\in\Ogroup(n)$, regardless of whether regularization is applied.
Therefore, replacing $\Phi$ with $\Phi_{\gamma}$ allows the same gradient method on the orthogonal group to be applied.

\section{Numerical Examples}\label{sec:numerical}
In this section, we numerically verify that the proposed attack can degrade the closed-loop performance of the controller while preserving the anomaly score.
As a baseline for comparison, we add independent uniform noise sampled from \([-\varepsilon\|X\|_{\max},\varepsilon\|X\|_{\max}]\) to each entry of the state data.
We also compare the proposed method with I-DGSM~\cite{sasahara2026adversarial}, which searches for data manipulations using a projected gradient method subject to the constraint $\|\Delta X\|_{\max}<\epsilon \|X\|_{\max}$.

We used the Suspension System (SS) model as the target system~\cite{tilbury2025ctms}.
The sampling period was set to $0.1$.
Its dimensions are $(n,m)=(4,2)$, its open-loop spectral radius is $\rho(A)=0.99$, and the closed-loop spectral radius with the controller designed from the original data is $\rho(A+BK)= 0.924$.
The designer employs the direct data-driven control method in~\cite{sasahara2026adversarial}.
The regularization parameter is set to $\lambda=10^5$, and the LQR controller weights are set to $Q=I, R=0.1I$.
In addition, the data targeted by the attack are assumed to be noise-free, and the data length is set to $T=15$.
The weighting matrix used to compute the anomaly score is set to $\Sigma=10^{-2}I$, and the designer's prior-knowledge sets are $\mathcal A = \R^{n\times n}$ and $\mathcal B = \R^{n\times m}$.

The attack objective is to maximize the closed-loop spectral radius $J_{\mathrm{atk}}(\cdot)=\rho(A+B\widetilde K)$ when the controller $\widetilde K$ designed from the manipulated data is implemented on the true system.
Since $\rho(\cdot)$ is not differentiable when the dominant eigenvalue is not simple, we use smoothing via the Gelfand approximation $\|M^{q}\|_{\mathrm{F}}^{1/q}$ with $q=20$~\cite{sasahara2026adversarial}.
The step sizes in the optimization algorithms are set to $\epsilon \|D\|_{\max}$ for I-DGSM and $0.2$ for the proposed method.
The number of iterations is set to 20, and 80 trials are performed for each parameter value.
The evaluation metrics are the closed-loop spectral radius $\rho(A+B\widetilde K)$, the anomaly score $\score(\widetilde D)$, and the magnitude of manipulation relative to the original data, $\norm{\widetilde D-D}_{\mathrm F}$.
For I-DGSM, the coefficient $\epsilon$ controlling the magnitude of data manipulation is used as a parameter, while the perturbation amplitude $\epsilon$ is used for the random-noise method.
For the proposed method, the regularization parameter $\gamma$ is varied.

\subsubsection{Representative Results for SS Model}
First, we verify that the proposed method can degrade closed-loop performance while preserving the anomaly score for the SS model.
Table~\ref{tab:ss_representative} displays representative attack results.
\begin{table}[tb]
\centering
\caption{Attack results for the SS model}
\label{tab:ss_representative}
\resizebox{\linewidth}{!}{%
\begin{tabular}{ccccc}
\toprule
Method
& Parameter
& $\rho(A+B\widetilde K)$
& $\score(\widetilde D)$
& $\norm{\widetilde D-D}_{\mathrm F}$\\
\midrule
Proposed
& $\gamma=0$
& 2.674
& $2.34\times10^{-23}$
& 23.17\\
Random
& $\epsilon=1\times10^{-1}$
& 0.948
& $6.02\times10^{4}$
& 16.36\\
I-DGSM
& $\epsilon=1\times10^{-2}$
& 1.143
& $3.01\times10^{3}$
& 1.46\\
I-DGSM
& $\epsilon=2\times10^{-1}$
& 1.234
& $2.21\times10^{4}$
& 21.35\\
\bottomrule
\end{tabular}%
}
\end{table}
Observe that the proposed method without regularization increases the closed-loop spectral radius to $2.674$ while keeping the anomaly score nearly zero at $2.34\times10^{-23}$.
This confirms that the proposed method significantly degrades closed-loop performance while preserving \textit{S}-invariant stealthiness.
For the random-noise method, although the magnitude of data manipulation is comparable to that of the proposed method, the closed-loop spectral radius remains below $1$, and the system is not destabilized.
Moreover, the anomaly score increases to $6.02\times10^{4}$.
For I-DGSM with $\epsilon=1\times10^{-2}$, the magnitude of data manipulation is small, but the anomaly score is $3.01\times10^{3}$.
In addition, the spectral radius is $1.143$, indicating that the system is destabilized, although this value is smaller than that achieved by the proposed method.
For $\epsilon=2\times10^{-1}$, the magnitude of data manipulation is comparable to that of the proposed method, but the anomaly score increases to $2.21\times10^{4}$.
Furthermore, although the spectral radius increases to $1.234$, it remains smaller than that achieved by the proposed method.

These results show that the proposed method can cause instability, while the manipulation cannot be detected based on the anomaly score.
In addition, the proposed method achieves higher attack effectiveness than the other methods with comparable magnitudes of data manipulation.

\subsubsection{Effect of Regularization Parameter}
As shown so far, the proposed method
degrades closed-loop performance while preserving the anomaly score.
However, large deviations from the original data can be observed in Fig.~\ref{fig:trajectory_comparison}.
We therefore vary the regularization parameter $\gamma$
to examine the trade-off between attack effectiveness
and manipulation magnitude.
Table~\ref{tab:ss_param_sweep_updated} confirms
a clear trade-off between these two objectives.

\begin{table}[tb]
\centering
\caption{Regularization trade-off in the proposed method}
\label{tab:ss_param_sweep_updated}
\resizebox{\linewidth}{!}{%
\begin{tabular}{ccccccc}
\toprule
$\gamma$
& 0 & 0.5 & 1 & 2 & 3 & 4 \\
\midrule
$\rho(A+B\widetilde K)$
& 2.674 & 2.123 & 1.657 & 0.990 & 0.955 & 0.950 \\
$\norm{\widetilde D-D}_{\mathrm F}$
& 23.17 & 20.60 & 15.55 & 8.99 & 5.84 & 4.23 \\
\bottomrule
\end{tabular}%
}
\end{table}

\begin{figure}[tb]
\centering
\includegraphics[width=.9\linewidth]{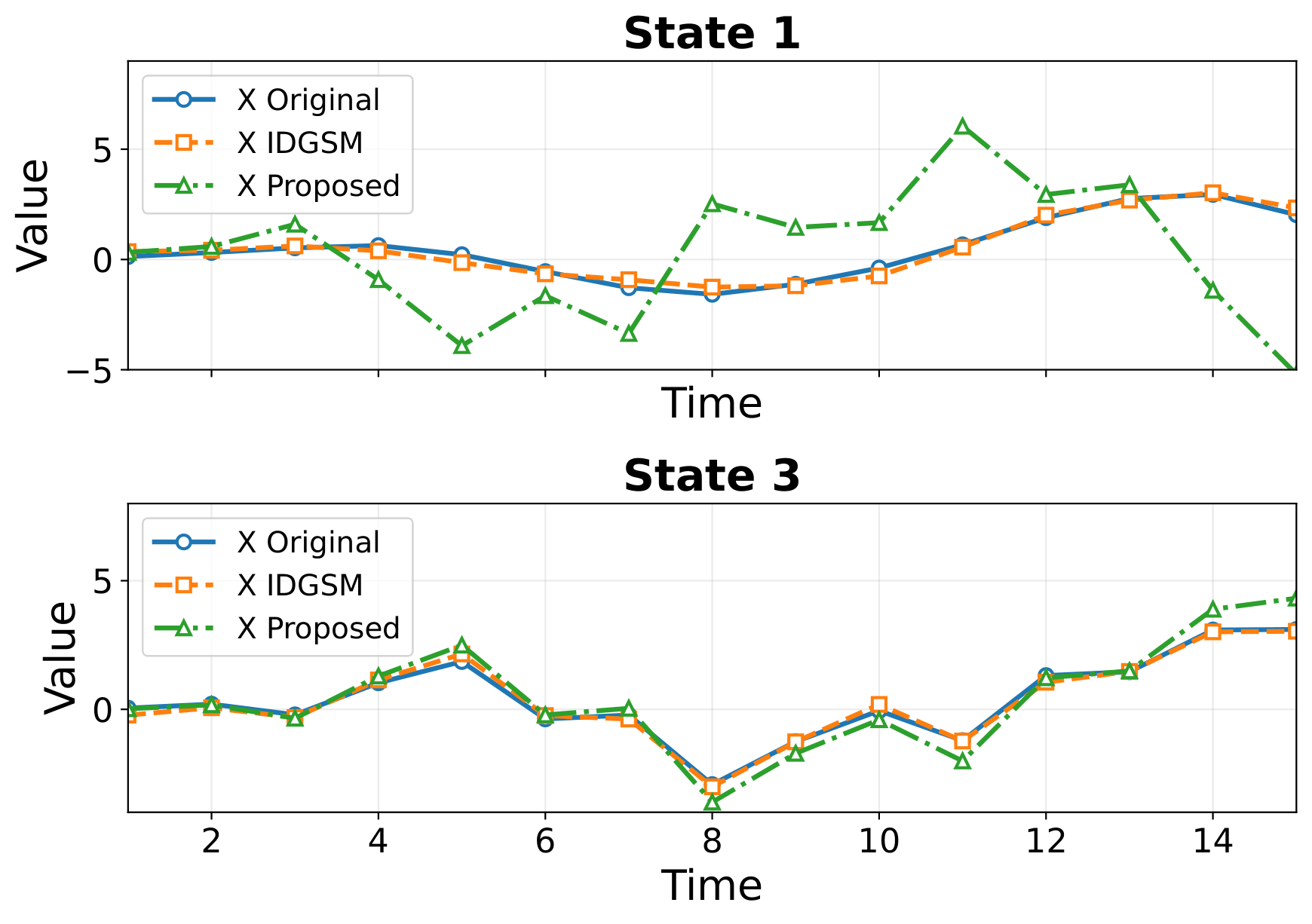}
\caption{Examples of data after the attack for the SS model. The I-DGSM results are plotted for $\epsilon=1\times10^{-2}$.}
\label{fig:trajectory_comparison}
\end{figure}

\begin{figure}[tb]
\centering
\includegraphics[width=.9\linewidth]{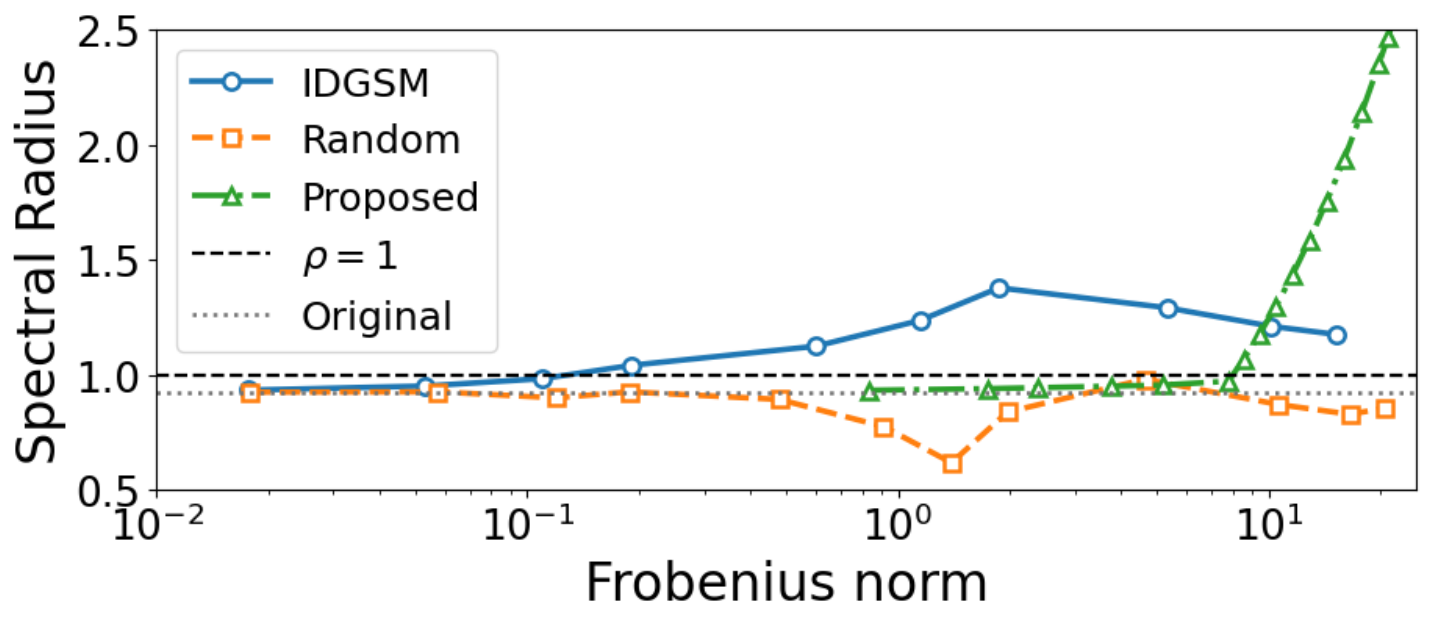}
\caption{Relationship between the magnitude of data manipulation and the spectral radius.}
\label{fig:fro_spectral}
\end{figure}

Next, we examine the relationship between attack effectiveness and the magnitude of data manipulation when the parameters are varied as plotted in Fig.~\ref{fig:fro_spectral}.
For small magnitudes of data manipulation, I-DGSM produces a larger closed-loop spectral radius than the proposed method and thus achieves higher attack effectiveness.
However, for larger magnitudes, the proposed method achieves a higher attack effectiveness.
We also observe that the behavior of the proposed method differs significantly between the range in which the closed-loop spectral radius is below $1$ and the range in which it exceeds $1$ and destabilizes the system.


\section{Conclusion}\label{sec:conclusion}
This paper has investigated stealthy sensor-data poisoning attacks on data-driven controller design.
We have characterized the transformations that preserve the anomaly score and established time-invariant weighted orthogonal transformations as their necessary and sufficient form.
Based on this characterization, we have developed an attack optimization method over the orthogonal group.
Numerical experiments have demonstrated that the proposed method can degrade closed-loop performance while preserving the anomaly score.
Future work includes attacks on subsets of sensors, more restrictive prior knowledge, and improved optimization on the orthogonal group.

\appendix

In this appendix, we provide sufficient conditions for the density requirement in the second condition of Assumption \ref{ass:stealth_characterization} to hold. We first characterize the set $\mathcal C_p$ defined in \eqref{eq:Cm1}.

\begin{lemma}
\label{lemma:AppendixA}
There exists a nonzero polynomial $\Psi_p(A)$ such that
$A\in\mathcal C_p \Leftrightarrow \Psi_p(A)>0.$
\end{lemma}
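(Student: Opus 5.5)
We will realize $\Psi_p$ as a sum of squares of minors of a reachability-type matrix whose entries depend polynomially on both $A$ and an auxiliary matrix $G$, and then eliminate $G$. Note that the statement uses $\Psi_p$ for a polynomial in $A$, unrelated to the Cholesky factor $\Psi$.

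By the definition \eqref{eq:Cm1}, $A\in\mathcal C_p$ holds iff there is $G\in\R^{n\times p}$ with
\[
\rank R(A,G)=n,\qquad R(A,G):=\begin{bmatrix} G & AG & \cdots & A^{n-1}G\end{bmatrix}.
\]
Define
\[
F(A,G):=\det\bigl(R(A,G)R(A,G)^\top\bigr),
\]
which by Cauchy--Binet equals the sum of squares of all $n\times n$ minors of $R(A,G)$. Thus $F$ is a polynomial in the entries of $(A,G)$, it satisfies $F\ge0$, and $F(A,G)>0$ exactly when $(A,G)$ is reachable. Hence $A\in\mathcal C_p$ iff the polynomial $G\mapsto F(A,G)$ is not identically zero.

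To eliminate $G$, expand $F(A,G)=\sum_\alpha c_\alpha(A)\,G^\alpha$ in monomials of the entries of $G$. Each coefficient $c_\alpha(A)$ is a polynomial in $A$. A real polynomial in $G$ vanishes identically on $\R^{n\times p}$ iff all its coefficients vanish. We therefore set
\[
\Psi_p(A):=\sum_\alpha c_\alpha(A)^2 .
\]
This is a polynomial in $A$, and $\Psi_p(A)>0$ iff some $c_\alpha(A)\neq0$ iff $F(A,\cdot)\not\equiv0$ iff $A\in\mathcal C_p$.

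It remains to show $\Psi_p$ is nonzero as a polynomial, which suffices to exhibit one matrix in $\mathcal C_p$. Take $A$ to be the companion (shift) matrix with $A e_k=e_{k+1}$, and $G=[e_1\;0\;\cdots\;0]$. Then $R(A,G)$ contains $e_1,\ldots,e_n$ as columns, so $(A,G)$ is reachable for every $p\ge1$. Hence $\Psi_p(A)>0$ and $\Psi_p\not\equiv0$.

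The main subtlety is the elimination of the existential quantifier over $G$. Making it explicit through the coefficient-sum-of-squares construction, together with the fact that a real polynomial vanishing on all of $\R^{n\times p}$ is the zero polynomial, resolves it. The remaining steps (Cauchy--Binet and the explicit reachable example) are routine.
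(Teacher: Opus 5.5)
Your proof is correct and takes essentially the same route as the paper, which also expands the $n\times n$ minors of the reachability matrix in the entries of $G$ and defines $\Psi_p$ as the sum of squares of the resulting $A$-dependent coefficients. The differences are cosmetic: the paper keeps the minors separate instead of combining them into $\det(RR^\top)$ via Cauchy--Binet, and it shows $\Psi_p\not\equiv0$ by appealing to matrices with distinct eigenvalues, whereas your shift-matrix witness is fully explicit.
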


\begin{proof}
Let $K_p(A,G):=[\,G\ AG\ \cdots\ A^{n-1}G\,]$,
and denote its $n\times n$ minors by $q_\ell(A,G)$.
Expanding them in the entries of $G$ gives
$q_\ell(A,G)=\sum_\alpha c_{\ell,\alpha}(A)G^\alpha$,
where each $c_{\ell,\alpha}(A)$ is a polynomial in $A$.
Since a polynomial vanishes identically if and only if
all its coefficients vanish,
\begin{equation*}
\begin{aligned}
A\notin\mathcal C_p
\Longleftrightarrow
\rank K_p(A,G)<n, \forall G
&\Longleftrightarrow
q_\ell(A,G)=0, \forall G,\ell\\
&\Longleftrightarrow
c_{\ell,\alpha}(A)=0, \forall\ell,\alpha.
\end{aligned}
\end{equation*}
Thus, defining
$\Psi_p(A):=\sum_{\ell,\alpha}c_{\ell,\alpha}(A)^2$
yields the desired equivalence.
Moreover, $\Psi_p$ is not identically zero,
since any matrix with distinct eigenvalues belongs to $\mathcal C_p$.
\end{proof}

Suppose that there exists $\mathcal M\subset\R^{n\times n}$ such that the following three conditions hold: (M1): $\mathcal M\subseteq\mathcal A\subseteq\overline{\mathcal M}.$ (M2): $\mathcal M$ is a connected real-analytic submanifold. (M3): $\mathcal M$ satisfies $\mathcal M\cap\mathcal C_p\neq\varnothing$.



\begin{theorem}
\label{theorem:clousure_condition}
Under (M1)--(M3), $\mathcal A\cap\mathcal C_p$ is dense in $\mathcal A$.
\end{theorem}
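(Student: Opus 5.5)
The plan is to combine Lemma~\ref{lemma:AppendixA} with the identity theorem for real-analytic functions on a connected real-analytic manifold. By Lemma~\ref{lemma:AppendixA}, $\mathcal C_p=\{A:\Psi_p(A)>0\}$, and since $\Psi_p$ is a sum of squares of polynomials, $\Psi_p\ge 0$ everywhere. Hence $\mathcal C_p$ is the complement of the zero set $\{\Psi_p=0\}$, and in particular it is open in $\R^{n\times n}$. Restrict $\Psi_p$ to $\mathcal M$ and set $f:=\Psi_p|_{\mathcal M}$. Because $\Psi_p$ is a polynomial and $\mathcal M$ is a real-analytic submanifold by (M2), $f$ is a real-analytic function on $\mathcal M$.

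The first step is to show that $\mathcal M\cap\mathcal C_p$ is dense in $\mathcal M$. By (M3), $f$ is not identically zero on $\mathcal M$. Suppose instead that $\mathcal M\cap\mathcal C_p$ were not dense in $\mathcal M$. Then there would be a nonempty open subset $V\subseteq\mathcal M$ on which $f\equiv 0$.

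The second step uses the identity theorem. For a real-analytic function on a connected real-analytic manifold, the set of points where all derivatives vanish is both open and closed, so vanishing on a nonempty open set forces vanishing everywhere. This would give $f\equiv0$ on $\mathcal M$, contradicting (M3). Hence $\{f\neq0\}=\mathcal M\cap\mathcal C_p$ is open and dense in $\mathcal M$.

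The third step transfers density from $\mathcal M$ to $\mathcal A$ using (M1). Take $A\in\mathcal A$ and $\varepsilon>0$. Since $\mathcal A\subseteq\overline{\mathcal M}$, there exists $A'\in\mathcal M\cap B_{\varepsilon/2}(A)$. By the density just shown, there exists $A''\in\mathcal M\cap\mathcal C_p\cap B_{\varepsilon/2}(A')$. Since $\mathcal M\subseteq\mathcal A$, we have $A''\in\mathcal A\cap\mathcal C_p\cap B_\varepsilon(A)$. As $A$ and $\varepsilon$ were arbitrary, $\mathcal A\cap\mathcal C_p$ is dense in $\mathcal A$.

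The main obstacle is the second step, namely applying the identity theorem correctly on the manifold. I would argue via local analytic charts. Let $\Omega$ be the set of points of $\mathcal M$ at which $f$ vanishes on some neighborhood. $\Omega$ is open by definition. To see that it is closed, take a limit point $A\in\mathcal M$ of $\Omega$ and an analytic chart $\varphi$ around $A$ with connected (ball) domain. In this chart $f\circ\varphi^{-1}$ is a real-analytic function on a connected open set in Euclidean space, and it vanishes on an open subset. By the Euclidean identity theorem it vanishes on the whole chart domain, so $A\in\Omega$. Connectedness of $\mathcal M$ then gives $\Omega=\varnothing$ or $\Omega=\mathcal M$, and the second option is excluded by (M3).
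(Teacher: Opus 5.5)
Your proof is correct and follows the same route as the paper. You use (M3) and Lemma~\ref{lemma:AppendixA} to get a nonzero real-analytic $\Psi_p|_{\mathcal M}$, apply the identity theorem on the connected $\mathcal M$ to obtain density of $\mathcal M\cap\mathcal C_p$ in $\mathcal M$, and transfer density to $\mathcal A$ via (M1). Your chart-based justification of the identity theorem and your $\varepsilon/2$ argument only spell out steps that the paper states as the identity-theorem citation and the closure chain $\mathcal A\subseteq\overline{\mathcal M}=\overline{\mathcal M\cap\mathcal C_p}\subseteq\overline{\mathcal A\cap\mathcal C_p}$.
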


\begin{proof}
By (M3) and the preceding lemma, there exists
$A_\star\in\mathcal M$ such that $\Psi_p(A_\star)>0$.
Thus, $\Psi_p|_{\mathcal M}$ is a nonzero
real-analytic function.
Suppose that $\mathcal M\cap\mathcal C_p$
is not dense in $\mathcal M$.
Then, there exists a nonempty open subset
$V\subseteq\mathcal M$ such that
$V\cap\mathcal C_p=\varnothing$.
By Lemma~\ref{lemma:AppendixA},
$\Psi_p(A)=0$ for all $A\in V$.
Since $\mathcal M$ is connected,
the identity theorem for real-analytic functions
implies $\Psi_p|_{\mathcal M}\equiv0$,
contradicting $\Psi_p(A_\star)>0$.
Hence, $\mathcal M\cap\mathcal C_p$ is dense
in $\mathcal M$.
By (M1),
$
\mathcal{A}
\subseteq\overline{\mathcal M}
=\overline{\mathcal M\cap\mathcal C_p}
\subseteq\overline{\mathcal A\cap\mathcal C_p},
$
which completes the proof.
\end{proof}

These conditions hold for common convex sets,
including sets with entrywise interval uncertainty
$\mathcal A=\{A:\underline A_{i,j}\le A_{i,j}
\le\overline A_{i,j},\ \forall i,j\}$,
ellipsoidal uncertainty
$\mathcal A=\{A:(A-A_0)^\top Q^{-1}(A-A_0)\preceq I\}$
with $Q\succ0$,
and symmetric matrices
$\mathcal A=\{A:A=A^\top\}$,
provided that $\mathcal A\cap\mathcal C_p\neq\varnothing$.

\bibliographystyle{IEEEtran}
\bibliography{references}

\end{document}